\spnewtheorem*{definition*}{Definition}{\bfseries}{\itshape}
\pgfplotsset{compat=1.18}
\crefname{equation}{}{}
\begin{document}
\title{Retention Induced Biases in a Recommendation System with Heterogeneous Users}
\titlerunning{Retention Bias with Heterogeneous Users}
% If the paper title is too long for the running head, you can set
% an abbreviated paper title here
%
\author{Shichao Ma\orcidID{0000-0001-6143-103X}}
\authorrunning{S. Ma}
% First names are abbreviated in the running head.
% If there are more than two authors, 'et al.' is used.
%
\institute{Yelp Inc., San Francisco CA 94105, USA \\
\email{shichaom@yelp.com}}
\maketitle              % typeset the header of the contribution
\begin{abstract}
I examine a conceptual model of a recommendation system (RS) with user inflow and churn dynamics. 
When inflow and churn balance out, the user distribution reaches a steady state. Changing the recommendation algorithm alters the steady state and creates a transition period. During this period, the RS behaves differently from its new steady state. 
In particular, A/B experiment metrics obtained in transition periods are biased indicators of the RS’s long-term performance. 
Scholars and practitioners, however, often conduct A/B tests shortly after introducing new algorithms to validate their effectiveness. 
This A/B experiment paradigm, widely regarded as the gold standard for assessing RS improvements, may consequently yield false conclusions. 
I also briefly touch on the data bias caused by the user retention dynamics.

\keywords{Recommendation system \and A/B experimentation \and Bias \and User retention and churn \and System convergence.}
\end{abstract}
\section{Introduction}

Running a recommendation system (RS) is rarely a one-shot game. For example, YouTube was founded in 2005 and has been serving users for nearly two decades.\footnote{\url{https://en.wikipedia.org/wiki/History_of_YouTube}}
In any long running RS, users join and leave over time. To be successful in the long run, it is critical for developers to constantly enhance the RS, attract new users and retain existing ones. 

A key improvement of great theoretical value is improving the quality of the recommendation. Throughout this paper, recommendation quality refers to the general attractiveness of the recommended items and the general efficacy of the recommendation algorithm. It represents the core user experience that an RS can offer. 
Users may lose interest and leave the platform if they find the recommended items irrelevant.
Conversely, high recommendation quality may encourage more users to stay, which leads to the long term success of the RS.

However, the mainstream RS literature does not explicitly consider the above dynamics. Most studies assume the user is a fixed set or has a fixed distribution and focus on recommendation algorithms that optimize some transient measures of recommendation quality given the fixed user distribution \cite{Salakhutdinov2008ProbabilisticFactorization,Rendle2010FactorizationMachines,Cheng2017LearningItems,Hidasi2016Session-basedNetworks,He2017NeuralAnalytics,Shi2010List-wiseFiltering,Weimer2008CoFiRANKRanking,Chen2020DeepRank:Recommendation,Liu2010PersonalizedBehavior,Huang2019OnlineDynamics,Yi2014BeyondPersonalization,Diaz-Aviles2014PredictingRanking,Guo2012BeyondBehavior}.

This literature leaves an important question unanswered: Does an RS that explicitly optimizes recommendation quality also implicitly optimize user retention, maximize user population and bring about the system's long-term success? If yes, then the mainstream approach, discovering new learning algorithms that optimize some forms of recommendation quality, is justified. 
If not, then practitioners may not benefit from the most advanced learning algorithm.

Some notable exceptions in the reinforcement learning (RL) literature address the optimization of user retention without answering the above research question \cite{Zou2019ReinforcementSystems,Cai2023ReinforcingSystem,Xue2022ResAct:Actor,Wu2017ReturningSystems}.
However, these approaches do not discuss why explicitly optimizing retention is necessary.
As \cite{Cai2023ReinforcingSystem} admit, ``the relationship between retention reward and the intermediate feedback is not clear.'' Moreover, even though the RL literature acknowledges the importance of churn, it still ignores \textit{user inflow}. Considering user inflow is obviously more faithful to the reality when we think about RS, but does it have any nontrivial implications that justify the extra complexity?

This paper differs from the existing literature by modeling the user as a flow and analyzing the system's equilibrium behavior. Conceptually, I demonstrate that the introduction of user inflow has profound and counterintuitive implications on how we think about improving recommendation algorithms. User inflow and churn are two competing forces that affect user population. When these two forces strike a balance, the system reaches a steady state in which the user distribution remains constant. 
However, the RS may take a long time to converge to this state. During this time, the system may differ significantly from its steady-state behavior.

Practically, I show that A/B testing after changing the recommendation algorithm may lead to biased test results. 
Changes in recommendation quality may alter user churn and disrupt the existing steady state. 
Because developers usually A/B test their systems shortly after making changes, they measure test metrics during the transition periods.
These transitory metrics do not necessarily agree with their steady-state counterparts.
More critically, changes that harm the system in the long run may seem beneficial in the A/B test.

The user inflow and churn dynamics also hints a new form of data bias in developing RS algorithms. In their excellent survey, \cite{Chen2023BiasDirections} define data bias as the deviation from the underlying distribution.
In this sense, the departure from the long-term steady-state distribution is also a data bias. Paradoxically, when a developer tries a new model architecture, she must train the new model using existing data, which is generally biased from the new architecture's steady state. What are the implications of this bias? How can it be mitigated? Given this paper's theoretical focus, I leave these questions to future research.

\subsubsection{Related Work}

Mainstream recommendation models generally can be framed as supervised learning problems where the ground-truth training labels are explicit user-item ratings or implicit 
feedback signals \cite{Koren2009MatrixSystems}. The goal is to accurately predict the training labels. Examples of general-purpose approaches include probabilistic matrix factorization \cite{Salakhutdinov2008ProbabilisticFactorization}, factorization machine \cite{Rendle2010FactorizationMachines}, and neural factorization machine \cite{He2017NeuralAnalytics}. A slight reformulation of the problem aims to predict ranking as accurately as possible. Examples include \cite{Shi2010List-wiseFiltering,Weimer2008CoFiRANKRanking,Chen2020DeepRank:Recommendation}. For implicit feedback signals, scholars have tried click \cite{Liu2010PersonalizedBehavior}, purchase \cite{Huang2019OnlineDynamics}, dwell time \cite{Yi2014BeyondPersonalization}, user engagement \cite{Diaz-Aviles2014PredictingRanking}, cursor movement \cite{Guo2012BeyondBehavior}, etc. Within this literature, training labels are typically viewed as instantaneous signals directly linked to users' current behavior. Furthermore, this body of work often assumes either implicitly or explicitly that users form a fixed set or follow a fixed distribution.
For example, \cite{Salakhutdinov2008ProbabilisticFactorization} write: ``Suppose we have $M$ movies, $N$ users...''; \cite{Weimer2008CoFiRANKRanking} write: ``Assume that we have $m$ items and $u$ users''; \cite{Cheng2017LearningItems} write: ``Let $U$ and $I$ be the sets of users and items respectively, where the cardinalities $|U| = m$ and $|I| = n$.'' 

A major exception that relaxes the assumption of the fixed user distribution is the RL literature. \cite{Cai2023ReinforcingSystem} develop an algorithm that directly minimizes the cumulative discounted return time. \cite{Wu2017ReturningSystems} propose a bandit-based solution to balance immediate feedback, future clicks, and exploration. \cite{Xue2022ResAct:Actor} describe a residual learning approach to improve upon the deployed production policy. \cite{Zou2019ReinforcementSystems} devise a method to optimize a composite inter-temporal user engagement metric. In this literature, users can churn explicitly or implicitly through performing null actions. Nevertheless, this literature does not consider user inflow as an integral part of their modeling.

Many articles that explore new RS algorithms are susceptible to the pitfall identified in this paper. \cite{Cheng2017LearningItems,Hidasi2016Session-basedNetworks,Liang2016ModelingRecommendation,Rendle2010FactorizationMachines,Salakhutdinov2008ProbabilisticFactorization,Xue2022ResAct:Actor} show performance gains of their respective algorithms only on some fixed benchmark datasets. \cite{Wu2017ReturningSystems} retroactively compare their solution against a battery of baseline bandit algorithms on Yahoo news recommendation logs. \cite{Zou2019ReinforcementSystems} evaluate their method on two days' data after their training time window. \cite{Guo2019PAL:Systems} measure the first-week click-through rate and conversion rate after turning on the experiment as their A/B test metrics.
\cite{Cai2023ReinforcingSystem} report live test metrics from day 0 to day 150 to claim superiority over their status quo. Even though their A/B test is much longer than others, it is unclear if 150 days are enough for their system to converge to the new steady state. \cite{Amatriain2015RecommenderStudy} describe their offline-online testing procedure for RS innovations in Netflix. They write:
\begin{displayquote}
Once offline testing has validated a hypothesis, we are ready to design and launch the online AB test that will demonstrate if an algorithmic change is an improvement from the perspective of user behavior. If it does, we will be ready to deploy the algorithmic improvement to the whole user-base.
\end{displayquote}
However, according to this paper, it might be unwise to launch an online A/B test ``once offline testing has validated a hypothesis'' because Netflix's monthly billing model can cause their system taking a long time to converge.

The existing literature on RS bias largely focuses on counterfactual biases like feedback loop \cite{Mansoury2020FeedbackSystems}, selection bias \cite{Wang2016LearningSearch}, exposure bias \cite{Liang2016ModelingRecommendation}, position bias \cite{Wang2018PositionSearch,Guo2019PAL:Systems}, and so on. A recent comprehensive literature survey on biases in RS can be found at \cite{Chen2023BiasDirections}. However, to the best of my knowledge, the bias caused by user inflow and churn dynamics had not been identified in the literature.

This paper also relates to the literature on user satisfaction with RS. \cite{Kim2021CustomerApproaches} note that accuracy and diversity can influence user satisfaction in RS. \cite{Nguyen2018UserSystems} show that user personality contributes to their satisfaction with RS. Based on these empirical findings, I incorporate both recommendation quality and unobserved factors into the model setup discussed below, assuming they impact user retention.

\section{Model Setup and the Status Quo}

I consider a world where time is discrete and infinite. At the beginning of each period, a continuum of new users with unit mass arrive and start to consume the contents that an RS recommends. For simplicity, I assume that the new user distribution can be parameterized by two variables, $x$ and $e$, and I use $F(x, e)$ to denote this distribution function. I assume $x$ is an observable feature to the RS. The machine learning model behind the RS may use $x$ to compute its predictions. In practice, $x$ can represent self-reported user demographics, signals collected from past user behaviors, or data obtained from third-party vendors about users' general web activities. The variable $e$, however, is hidden from the RS and includes private user information relevant to RS consumption, such as personality, job, wealth, education level, and offline activities.

To further simplify the setup, I assume that $x$ and $e$ can each take only two values: $3/4$ or $1/4$. This creates four user types based on their $(x,e)$ combinations.
I also make the following symmetric distributional assumption: $F(x=3/4, e=3/4) = F(x=1/4, e=1/4) = \alpha$ and $F(x=3/4, e=1/4) = F(x=1/4, e=3/4) = 1/2 - \alpha$, where $\alpha \in (0, 1/2)$ is an exogenous parameter. 

At the end of each period, each existing user decides whether to churn or not. Churn means the user stops using the RS. Once a user churns, they never return.\footnote{In the real world, churned users may change their mind and come back, but in this model, they are treated as new users.}
For the purpose of this paper, I abstract away the specific contents recommended by the RS. Instead, I assume that user's churn decision can be influenced by a statistic of items recommended during the period.\footnote{Thus, recommendation quality is a short-term signal by assumption, aligning with the mainstream literature that uses short-term signals to train RS models.} I interpret this statistic recommendation quality and denote it as $q$, where $q \in (0, 1)$. 
In the real world, the recommendation quality statistic may have many names. In an ad recommendation system, it may be the user-level click-through rate. In a movie recommendation system, it may be the view rate or completion rate. In an online shopping system, it may be the conversion rate.

As shown by empirical studies \cite{Erdem2023OnPerformances}, different user segments may have different performance. Because $x$ is an observable but $e$ is not, $q$ must be a function mapping from the feature set, $x \in \{1/4, 3/4\}$, to the recommendation quality statistic. This function summarizes the intricacies of a recommendation algorithm as a simple mapping.
As the system only sees two segments of users, I call $x=3/4$ the high segment and $x=1/4$ low segment. 

Formally, I assume that a user who has not churned has a probability of $(1 - q) (1 - e)$ to churn at the end of the period. The term $(1 - q)$ indicates that as the recommendation quality increases, the likelihood of user churn decreases. For $1-e$, it is known as the frailty term in the survival analysis literature \cite{Balan2020AModels}. The frailty term denotes user's unobserved heterogeneity that cannot be explained by observed covariates \cite{Nguyen2018UserSystems}. Higher $e$ implies a healthier user, who is less likely to churn for reasons unobserved to the RS, and vice versa. A user who does not churn at the end of a period is called a retained user for that period.

The user distribution in the system may change over time due to user inflow and churn. It may also reach a stable point where user inflow and churn are balanced. This point is called the steady state, and is defined as follows.
\begin{definition*}[Steady State]
The steady state of the system is an equilibrium point where user inflow fully offsets churned users at each period for each type. Formally, $m_\infty(x, e) \cdot \Pr(\mathrm{churn}|x, e) = F(x, e)$ for all $(x, e)$, where $m_\infty(x, e)$ is the population of type $(x, e)$ users retained in the steady state.
\end{definition*}
All steady-state quantities are subscripted by $\infty$. The steady-state user distribution can be obtained by $m_\infty(x, e) = F(x, e) / \Pr(\mathrm{churn}|x, e)$. 

\subsubsection{Success Metrics} 

In our current setup, two metrics that measure the success of RS are naturally defined: \textit{retained user population} and \textit{average recommendation quality (ARQ)}. 
To assess the long-term success of the system, I focus on their steady-state values.

\paragraph{Steady State User Population}
For many tech companies, getting more users and grow is a constant theme. Thus, I consider the user population retained in the steady-state, denoted as $m_\infty$, the core metric to measure the RS success. Its real world counterpart is per-period (e.g., daily or monthly) active or paying user, which is considered a top-line metric by many real world RSs \cite{Facebook2015,Murariu2023}. In our current setup, $m_\infty = \sum_{x}\sum_{e} m_\infty(x, e)$.

\paragraph{Average Recommendation Quality (ARQ)}
Developers may routinely track aggregated recommendation quality metrics like click-through rate, completion rate, or average rating. Arguably, these metrics are the most direct measures of an RS’s performance. Therefore, developers may naturally focus on enhancing them. For example, a developer working on a video recommendation system might aim to boost the watch rate. Similarly, a developer of an ad recommendation system might consider the average click-through rate as a key metric. Using $\overline{q}$ to denote ARQ, we have $\overline{q}_\infty = \left(\sum_x q(x) (\sum_e m(x, e))\right) / (\sum_{x}\sum_{e} m(x, e))$.

\subsubsection{Status Quo} 
Suppose that the RS has been using an existing recommendation algorithm, called status quo, for a long time and has reached its steady state. I use $*$ to superscript quantities of the status quo. As all metrics in the status quo have reached their steady state values, I omit their $\infty$ subscripts to simplify notation. Normalizing $q^*(x)=x$ to be the recommendation quality function in the status quo, we can easily obtain $m^*(3/4, 3/4) = 16\alpha$, $m^*(3/4, 1/4) = m^*(1/4, 3/4) = 8(1-2\alpha)/3$, $m^*(1/4, 1/4) = 16\alpha/9$, $m^* = 16(4\alpha + 3)/9$, and $\overline{q}^* = (8\alpha+3)/(8\alpha+6)$.

\section{Treatment Algorithm and A/B Experimentation}

In this section, I consider a stereotypical situation where a developer aims to improve the recommendation algorithm and evaluates a treatment algorithm via an A/B test. An A/B test is a side-by-side comparative method that assesses two algorithms to determine which performs better. I use $T$ to superscript quantities of the treatment. Given that $q^*(x)$ encapsulates the status quo recommendation algorithm, any enhancement to the current algorithm boils down to modifying this function. To simplify the notation, I define $q_3 \equiv q^T(3/4)$ and $q_1 \equiv q^T(1/4)$.

The steady-state behavior of the new algorithm can be studied immediately. 

\begin{proposition}\label{prop:long-term-no-conflict}
We have that $m^T_\infty \geq (<)\ m^*$ if and only if $\overline{q}^T_\infty \geq (<)\ \overline{q}^*$.
\end{proposition}
All proofs can be found in the appendix.
This proposition justifies the bedrock of the mainstream RS literature: optimizing for recommendation quality and gaining users are perfectly aligned in the long run. From now on, I use \textit{steady-state performance} as a collective term to denote either of the steady state metrics.

However, \Cref{prop:long-term-no-conflict} does not imply that A/B testing the treatment algorithm will always produce the correct outcome.

\subsubsection{A/B Experimentation}

Concretely, at the onset of some period, the treatment algorithm is implemented on a random subset of users. To simplify the discussion of experimental results, I assume that this subset is also a continuum that inherits all statistical properties of the existing user distribution. 
The A/B experimentation lasts for one period, to mirror the real-world scenario where A/B tests are typically much shorter than the longevity of the system. 
The developer collects data at the end of the period and makes a verdict between the status quo and the treatment.

I consider the following two metrics to measure the experiment success: ARQ and churn rate. Notice metrics observed in the experiment, subscripted with $E$, are generally different from their steady state counterparts. In our current setup,
\begin{equation}\label{eq:rq_def}
    \overline{q}^T_E = \frac{1}{m^*}
    \sum_x \left( q^T(x) \left( \sum_e m^*(x, e) \right) \right).
\end{equation}

\paragraph{Churn Rate} 
Churn rate is the percentage of users who churn out of the total population of existing users during the experiment period. In some situations, the developer may directly monitor user churn. This could be because the developer understands that her work influences churn and does not want to lose users. It may also be that the developer cannot directly observe the recommendation quality and uses churn rate as a substitute. Using $\lambda$ to denote churn rate, we have $\lambda^* = 1/m^*$ and 
\begin{equation}\label{eq:cr_def}    
    \lambda^T_E = \frac{1}{m^*}
    \sum_x \left( (1-q^T(x)) \left(\sum_e m^*\left(x, e\right) (1-e)\right) \right).
\end{equation}
It is important to note that both \cref{eq:rq_def,eq:cr_def} use the user distribution of the status quo. This is because the user distribution has not changed by the treatment algorithm yet when it is first introduced.

The following lemma suggests that ARQ and churn rate may provide contradicting signals to the experiment's success, despite optimizing for recommendation quality and gaining users are perfectly aligned in the long run.

\begin{lemma}\label{lemma:q-churn-not-same}
If 
\begin{equation}\label{eq:lemma1}
    \left((4\alpha-3)q_1 + 8\alpha + 3\right)/(12\alpha+3) < q_3 < (5-2q_1) / 6
\end{equation}
and either $q_1 < 1/4 < \alpha$ or $\alpha < 1/4 < q_1$ holds, then $\overline{q}^T_E > \overline{q}^*$ and $\lambda^T_E > \lambda^*$.
\end{lemma}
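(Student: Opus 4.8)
The plan is to reduce each of the two target inequalities $\overline{q} > \overline{q}^*$ and $\overline{\lambda} > \overline{\lambda}^*$ to an explicit algebraic condition on $q_1$, $q_3$, and $\alpha$, and then to recognize those two conditions as \emph{exactly} the left- and right-hand bounds on $q_3$ in \cref{eq:lemma1}. The supplementary hypothesis (either $q_1 < 1/4 < \alpha$ or $\alpha < 1/4 < q_1$) will be shown to be precisely what guarantees that the $q_3$-interval in \cref{eq:lemma1} is non-empty, so that the premise is consistent rather than vacuous.

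First I would substitute the status quo steady-state populations $m^*(3/4,3/4)=16\alpha$, $m^*(3/4,1/4)=m^*(1/4,3/4)=8(1-2\alpha)/3$, $m^*(1/4,1/4)=16\alpha/9$, and $m^*=16(4\alpha+3)/9$ into the churn definition \cref{eq:cr_def}. The key simplification is that the $\alpha$-dependent weights on $(1-q_3)$ and on $(1-q_1)$ collapse after the $\alpha$-terms cancel, leaving $\overline{\lambda} = \frac{1}{m^*}\bigl(2(1-q_3) + \tfrac{2}{3}(1-q_1)\bigr)$. Since $\overline{\lambda}^* = 1/m^*$, multiplying through by $m^*>0$ and rearranging shows that $\overline{\lambda} > \overline{\lambda}^*$ is equivalent to $6q_3 + 2q_1 < 5$, i.e.\ exactly the right-hand bound $q_3 < (5-2q_1)/6$ of \cref{eq:lemma1}.

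Next I would carry out the analogous substitution in the ARQ definition \cref{eq:rq_def}, using the segment totals $\sum_e m^*(3/4,e) = (32\alpha+8)/3$ and $\sum_e m^*(1/4,e) = (24-32\alpha)/9$, and compare against $\overline{q}^* = (8\alpha+3)/(8\alpha+6)$. Writing $8\alpha+6 = 2(4\alpha+3)$ lets the denominators cancel cleanly, and clearing the factor $12\alpha+3 = 3(4\alpha+1)$ reduces $\overline{q} > \overline{q}^*$ to $3(4\alpha+1)q_3 > (4\alpha-3)q_1 + 8\alpha+3$, which is precisely the left-hand bound $q_3 > \frac{(4\alpha-3)q_1+8\alpha+3}{12\alpha+3}$. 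Because $12\alpha+3 > 0$ throughout $\alpha\in(0,1/2)$, this equivalence holds with no case split and no sign reversal. At this point the two inequalities in \cref{eq:lemma1} are literally equivalent to the two conclusions, so the implication is immediate.

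It remains to confirm that the premise is not empty, i.e.\ that the lower bound in \cref{eq:lemma1} can lie strictly below the upper bound. Cross-multiplying by the positive quantities $6$ and $12\alpha+3$ reduces this to $12(4\alpha-1)q_1 < 3(4\alpha-1)$. The only delicate point — and the step I expect to be the main obstacle — is the sign analysis of the factor $4\alpha-1$: dividing by $12(4\alpha-1)$ preserves the inequality when $\alpha>1/4$ (yielding $q_1<1/4$) but reverses it when $\alpha<1/4$ (yielding $q_1>1/4$), which is exactly the disjunction $q_1<1/4<\alpha$ or $\alpha<1/4<q_1$; the borderline $\alpha=1/4$ makes both sides vanish and the interval degenerate, consistent with its exclusion. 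Handling this sign-flip correctly, together with the bookkeeping of the $\alpha$-weighted population sums, is where errors are most likely to arise, so I would verify the status quo values $q_3=3/4$, $q_1=1/4$ recover $\overline{\lambda}=1/m^*$ and $\overline{q}=\overline{q}^*$ as an arithmetic sanity check before concluding.
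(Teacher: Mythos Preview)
Your proposal is correct and follows essentially the same route as the paper's proof: both substitute the status quo steady-state populations into \cref{eq:rq_def} and \cref{eq:cr_def}, reduce $\overline{q}>\overline{q}^*$ and $\overline{\lambda}>\overline{\lambda}^*$ to $(12\alpha+3)q_3+(3-4\alpha)q_1>8\alpha+3$ and $6q_3+2q_1<5$ respectively, identify these as the two sides of \cref{eq:lemma1}, and then show via the sign of $4\alpha-1$ that the interval is non-empty precisely under the stated disjunction. Your intermediate arithmetic (the $\alpha$-cancellation giving $\overline{\lambda}=\tfrac{1}{m^*}\bigl(2(1-q_3)+\tfrac{2}{3}(1-q_1)\bigr)$, the segment totals, and the cross-multiplication $12(4\alpha-1)q_1<3(4\alpha-1)$) all checks out and matches the paper line for line.
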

Practitioners may notice that quality metrics and churn do not always go hand in hand in reality. \Cref{lemma:q-churn-not-same} provides a theoretical explanation.
The driving force here is the frailty term $1-e$. Because of it, the marginal contributions of $\Delta q_1$ and $\Delta q_3$ to the hazard rate vary among user types. However, it does not affect the marginal contributions to ARQ.
Thus, when $q_1$ and $q_3$ change by the ``correct'' amount, one metric may appear to improve while the other may seem to regress.

The following corollary summarizes the conditions under which the experiment is deemed a ``success'' by either metric. When it holds, it’s reasonable to assume that the developer would favor the treatment over the status quo.
\begin{corollary}\label{cor:ab-win}
Suppose $q_1$ and $q_3$ satisfy either of the following conditions:
\begin{enumerate}
    \item If $\alpha = 1/4$, $6q_3+2q_1>5$.
    \item If $\alpha < 1/4$, $(q_1, q_3) \in \{(q_1, q_3)|q_1 \geq 1/4, 6q_3+2q_1>5\} \cup \{(q_1, q_3)|q_1 < 1/4, (12\alpha+3)q_3+(3-4\alpha)q_1>8\alpha+3\}$.
    \item If $\alpha > 1/4$, $(q_1, q_3) \in \{(q_1, q_3)|q_1 \leq 1/4, 6q_3+2q_1>5\} \cup \{(q_1, q_3)|q_1 > 1/4, (12\alpha+3)q_3+(3-4\alpha)q_1>8\alpha+3\}$.
\end{enumerate}
The developer observes $\overline{q}^T_E > \overline{q}^*$ and $\lambda^T_E < \lambda^*$ in her experiment.
\end{corollary}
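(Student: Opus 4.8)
The plan is to reduce both observed-metric comparisons to explicit linear inequalities in $(q_1,q_3)$ and then to carve up the $(q_1,q_3)$-plane by comparing the two resulting boundary lines. First I would substitute the status-quo populations $m^*(3/4,3/4)=16\alpha$, $m^*(3/4,1/4)=m^*(1/4,3/4)=8(1-2\alpha)/3$, and $m^*(1/4,1/4)=16\alpha/9$ into \cref{eq:rq_def,eq:cr_def}. Collecting terms in the numerator of $\overline{q}$ gives coefficients $(32\alpha+8)/3$ on $q_3$ and $(24-32\alpha)/9$ on $q_1$; dividing by $m^*=16(4\alpha+3)/9$ and subtracting $\overline{q}^*=(8\alpha+3)/(2(4\alpha+3))$ clears all denominators, so that $\overline{q}>\overline{q}^*$ becomes the linear inequality $(12\alpha+3)q_3+(3-4\alpha)q_1>8\alpha+3$ (call its boundary line $A$). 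The churn computation is cleaner: the $e=3/4$ and $e=1/4$ populations combine so that the bracketed sum in \cref{eq:cr_def} collapses to $2(1-q_3)+\tfrac{2}{3}(1-q_1)$, and since $\overline{\lambda}^*=1/m^*$, the inequality $\overline{\lambda}<\overline{\lambda}^*$ reduces to $6q_3+2q_1>5$ (boundary line $B$).

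Next I would establish the geometry. Both lines have positive coefficients on $q_1$ and $q_3$ (here I use $\alpha<1/2$ to get $3-4\alpha>0$), so each inequality selects the ``upper'' half-plane and each target region is the set lying above both lines, i.e.\ above their upper envelope. I would then intersect $A$ and $B$: substituting $q_3=(5-2q_1)/6$ from $B$ into $A$ yields $12(1-4\alpha)q_1=3(1-4\alpha)$, so for $\alpha\neq 1/4$ the lines meet precisely at the status-quo point $(q_1,q_3)=(1/4,3/4)$, whereas for $\alpha=1/4$ the factor $1-4\alpha$ vanishes and lines $A$ and $B$ coincide. This immediately gives case 1: when $\alpha=1/4$ the two inequalities are identical, namely $6q_3+2q_1>5$.

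For $\alpha\neq 1/4$ I would compare slopes. Writing each line as $q_3$ versus $q_1$, line $B$ has slope $-1/3$ and line $A$ has slope $-(3-4\alpha)/(12\alpha+3)$; comparing $3(3-4\alpha)=9-12\alpha$ with $12\alpha+3$ shows that $A$ is steeper than $B$ exactly when $\alpha<1/4$ and flatter when $\alpha>1/4$. Since both lines pass through $(1/4,3/4)$, the steeper line is the upper envelope on one side of $q_1=1/4$ and the shallower line on the other. Concretely, for $\alpha<1/4$ line $A$ dominates for $q_1<1/4$ and line $B$ dominates for $q_1\ge 1/4$, so lying above the binding (larger) line there automatically places a point above the other — exactly the union written in case 2; for $\alpha>1/4$ the roles swap across $q_1=1/4$, giving case 3.

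The main obstacle I anticipate is purely bookkeeping in this last step: correctly identifying which of $A$, $B$ is the \emph{upper} envelope on each side of $q_1=1/4$ and verifying that stating only the binding constraint in each piece suffices (the non-binding inequality being then implied). Care is needed with the direction of the slope comparison and with reading ``above line $A$'' off from the sign of its $q_3$-coefficient $12\alpha+3>0$. The algebraic reductions of the first paragraph are routine once the $m^*(x,e)$ values are inserted, so the only real content is the envelope/slope analysis, which the intersection-at-$(1/4,3/4)$ observation makes transparent.
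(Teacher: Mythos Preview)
Your proposal is correct and follows essentially the same approach as the paper: reduce $\overline{q}>\overline{q}^*$ and $\overline{\lambda}<\overline{\lambda}^*$ to the two linear inequalities \cref{eq:q-up} and \cref{eq:churn-down}, observe that their boundary lines coincide when $\alpha=1/4$ and otherwise intersect at the status-quo point $(1/4,3/4)$, and then determine which inequality is binding on each side of $q_1=1/4$. The only cosmetic difference is that the paper packages the ``which line dominates where'' comparison as the algebraic implications of \Cref{lemmaA1}, whereas you phrase it as a slope/upper-envelope argument; the content is identical.
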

In A/B experiments, success metrics, regardless of their definitions, are generally computed based on existing users. 
A change to the recommendation algorithm may impact different user segments in varying ways, possibly altering the steady state.
The system moves to its new steady state through user inflow and churn.
Given that churn takes time to manifest, convergence to the new steady state can be a lengthy process. Thus, metrics based on existing users may not always provide insights into the system’s future behavior.
In fact, relying on such metrics could be misleading for the long-term success of the RS.
The rest of the paper demonstrates this insight in detail.

\begin{lemma}[ARQ Improvement]\label{lemma:q-up-mass-down}
When $q_1$ and $q_3$ satisfy
\begin{equation}\label{eq:q-up}
    (12\alpha+3)q_3 + (3-4\alpha)q_1 > 8\alpha+3
\end{equation}
and
\begin{equation}\label{eq:mass-down}
    (4\alpha+1)/(1-q_3) + (3-4\alpha)/(1-q_1) < 8(4\alpha+3)/3,
\end{equation}
the developer observes $\overline{q}^T_E > \overline{q}^*$ in her experiment. However, $\overline{q}^T_\infty < \overline{q}^*$.
\end{lemma}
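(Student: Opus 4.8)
The plan is to handle the two claims separately and to let \Cref{prop:long-term-no-conflict} carry the harder half. For the experiment-time claim, I would substitute the status-quo steady-state masses $m^*(3/4,3/4)=16\alpha$, $m^*(3/4,1/4)=m^*(1/4,3/4)=8(1-2\alpha)/3$, and $m^*(1/4,1/4)=16\alpha/9$, together with $m^*=16(4\alpha+3)/9$, into the definition \cref{eq:rq_def}. Collecting the coefficients of $q_3$ and $q_1$ collapses $\overline{q}$ to the compact form $\overline{q}=\bigl[(12\alpha+3)q_3+(3-4\alpha)q_1\bigr]/\bigl(2(4\alpha+3)\bigr)$. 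Since $\overline{q}^*=(8\alpha+3)/(8\alpha+6)$ shares the denominator $2(4\alpha+3)=8\alpha+6$, the inequality $\overline{q}>\overline{q}^*$ reduces, after clearing the common denominator, to exactly $(12\alpha+3)q_3+(3-4\alpha)q_1>8\alpha+3$, which is \cref{eq:q-up}. So the first half of the statement is a direct algebraic equivalence.

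For the steady-state claim, the key observation is that \cref{eq:mass-down} is nothing but the statement that the treatment's total steady-state population falls short of the status quo's. Starting from \cref{eq:steady-state-user-population}, I would group the two terms carrying $1/(1-q_3)$ and the two carrying $1/(1-q_1)$, obtaining the treatment's steady-state population $M=\tfrac{2}{3}\bigl[(4\alpha+1)/(1-q_3)+(3-4\alpha)/(1-q_1)\bigr]$. Comparing $M$ against $m^*=16(4\alpha+3)/9$ and multiplying through by $3/2$ turns the inequality $M<m^*$ into precisely \cref{eq:mass-down}.

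With these two reductions in place, \Cref{prop:long-term-no-conflict} finishes the argument: it states that the candidate's steady-state population and its steady-state ARQ lie on the same side of the status quo. Since \cref{eq:mass-down} gives $M<m^*$, the proposition immediately yields that the treatment's steady-state ARQ is below $\overline{q}^*$, while \cref{eq:q-up} simultaneously guarantees that the experiment reads $\overline{q}>\overline{q}^*$. Combining the two gives the stated clash between the myopic and the long-run quality metrics.

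I do not expect a genuine obstacle, since every step is bookkeeping; the one thing to resist is computing the steady-state ARQ directly from \cref{eq:steady-state-recommendation-quality}, which is a ratio of rational functions in $q_1,q_3$ and is unpleasant to compare with $\overline{q}^*$ head-on. Routing the long-run comparison through the population inequality and \Cref{prop:long-term-no-conflict} sidesteps that computation entirely, so the main care required is simply tracking the simplifications and confirming that the two hypotheses are jointly satisfiable, so that the lemma is non-vacuous.
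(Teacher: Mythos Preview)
Your proof is correct and follows essentially the same route as the paper: reduce $\overline{q}>\overline{q}^*$ to \cref{eq:q-up} by direct substitution, reduce \cref{eq:mass-down} to the population comparison, and pass to steady-state ARQ. Two minor differences worth noting: you make the appeal to \Cref{prop:long-term-no-conflict} explicit (the paper only derives the population inequality and leaves that bridge implicit), and the paper devotes the bulk of its proof to exhibiting a concrete $(q_1,q_3)$ satisfying both hypotheses to certify non-vacuity, which you flag but do not carry out.
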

Remarkably, even though the developer observes an improvement in ARQ in the experiment, the new algorithm may decrease this quality in the long run. 
To further illustrate this point, consider a simulation where the developer applies a treatment algorithm with $q_1 = 7/16, q_3=45/64$ to all users in a system with $\alpha=1/4$. The simulated ARQ path, plotted in the left panel of \cref{fig:arq_simulation}, shows that ARQ hikes following the introduction of the treatment. However, it gradually converges to its new steady state value, which is significantly lower than the status quo value.

\begin{figure}[t]
\centering
\begin{subfigure}[t]{0.49\textwidth}
\centering
\begin{tikzpicture}
\begin{axis}[
    axis lines = left,
    xmin=0, xmax=42, 
    ymin=0.61, ymax=0.64,
    scale=0.6, 
    xlabel={Periods after Treatment's Introduction},
    ylabel={ARQ},
    extra x ticks={1},
    transform shape,
    clip=false,
    label style={font=\footnotesize},
    tick label style={font=\footnotesize},
    scale only axis=true,
    width=1.3\textwidth,
    height=0.5\textwidth,
]
\addplot[color=black, mark=none, thick] table[x index=0,y index=1,col sep=comma] {arq.csv};
\draw[dotted] (0,0.625) -- (1,0.63671875); %status quo
\filldraw[black] (0,0.625) circle (1pt) node[anchor=west]{};
\filldraw[black] (1,0.63671875) circle (1pt) node[anchor=west]{};
\end{axis}
\end{tikzpicture}
\end{subfigure}
\begin{subfigure}[t]{0.49\textwidth}
\centering
\begin{tikzpicture}
\begin{axis}[
    axis lines = left,
    xmin=0, xmax=42, 
    ymin=6.85, ymax=7.17,
    scale=0.6, 
    xlabel={Periods after Treatment's Introduction},
    ylabel={User Population},
    ytick={6.85,6.95,7.05,7.15},
    extra x ticks={2},
    transform shape,
    label style={font=\footnotesize},
    tick label style={font=\footnotesize},
    scale only axis=true,
    width=1.3\textwidth,
    height=0.5\textwidth,
]
\addplot[color=black, mark=none, thick] table[x index=0,y index=1,col sep=comma] {data.csv};
\draw[dotted] (2,6.85) -- (2,7.152370876736111);
\end{axis}
\end{tikzpicture}
\end{subfigure}
\caption{ARQ and Population Dynamics after Introducing a Treatment Algorithm}
\label{fig:arq_simulation}
\end{figure}
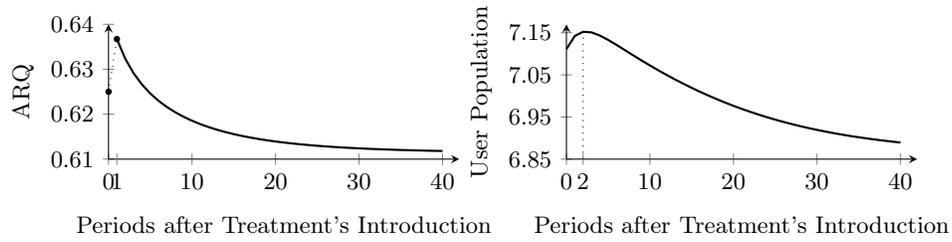

\begin{lemma}[Churn Reduction]\label{lemma:churn-down-mass-down}
When $q_1$ and $q_3$ satisfy
\begin{equation}\label{eq:churn-down}
    6q_3 + 2q_1 > 5
\end{equation}
and \cref{eq:mass-down}, the developer observes $\overline{\lambda}^T_E < \overline{\lambda}^*$ in her experiment. However, $m^T_\infty < m^*$.
\end{lemma}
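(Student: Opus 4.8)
The plan is to prove the two assertions of the lemma separately, since they are logically independent: the first concerns the short-run experimental churn rate and follows purely from substituting the status-quo masses into \cref{eq:cr_def}, while the second concerns the treatment's steady-state population and follows from \cref{eq:steady-state-user-population}. The hypotheses \cref{eq:churn-down} and \cref{eq:mass-down} turn out to be exactly the conditions characterizing each assertion, so the lemma is the conjunction of two sharp equivalences, of which only the ``$\Rightarrow$'' directions are strictly needed.

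For the churn-rate claim, I would substitute the explicit status-quo masses $m^*(3/4,3/4)=16\alpha$, $m^*(3/4,1/4)=m^*(1/4,3/4)=8(1-2\alpha)/3$, and $m^*(1/4,1/4)=16\alpha/9$ into \cref{eq:cr_def}, pairing each mass with the correct hazard factor $(1-q(x))(1-e)$, i.e.\ $1-e=1/4$ for the $e=3/4$ types and $1-e=3/4$ for the $e=1/4$ types. Collecting the $(1-q_3)$ and $(1-q_1)$ terms, I expect the $\alpha$-dependence in the bracket to cancel, leaving $\overline{\lambda}=\frac{1}{m^*}\bigl[2(1-q_3)+\tfrac{2}{3}(1-q_1)\bigr]$. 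Since $\overline{\lambda}^*=1/m^*$, the inequality $\overline{\lambda}<\overline{\lambda}^*$ reduces, after clearing the common factor $1/m^*$ and multiplying by $3$, to $6(1-q_3)+2(1-q_1)<3$, which rearranges to $6q_3+2q_1>5$; this is precisely \cref{eq:churn-down}.

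For the steady-state claim, I would recompute the treatment masses from $m(x,e)=F(x,e)/[(1-q(x))(1-e)]$ and sum them, which is exactly \cref{eq:steady-state-user-population} evaluated at $q(3/4)=q_3$ and $q(1/4)=q_1$. Grouping the $q_3$ and $q_1$ contributions gives the total population $\tfrac{2}{3}\bigl[\tfrac{4\alpha+1}{1-q_3}+\tfrac{3-4\alpha}{1-q_1}\bigr]$. Requiring this to be strictly below $m^*=16(4\alpha+3)/9$ and clearing the factor $2/3$ yields $\tfrac{4\alpha+1}{1-q_3}+\tfrac{3-4\alpha}{1-q_1}<\tfrac{8(4\alpha+3)}{3}$, which is \cref{eq:mass-down}. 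This half of the argument is identical to the steady-state portion of \Cref{lemma:q-up-mass-down}, so I could simply invoke that computation rather than repeat it.

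I do not anticipate a genuine obstacle; every step is a finite algebraic identity with no case analysis. The only point requiring care is the cancellation that produces the coefficients $2$ and $2/3$ in $\overline{\lambda}$. A clean way to see why this happens, and to guard against arithmetic slips, is to note that $m^*(x,e)(1-q(x))(1-e)=F(x,e)\frac{1-q(x)}{1-q^*(x)}$, since the frailty factor $1-e$ is common to the status-quo mass and the treatment hazard. Summing over $e$ leaves only the $x$-marginal of $F$, which equals $1/2$ for each segment irrespective of $\alpha$; hence the $\alpha$-dependence drops out and total treatment churn equals $2(1-q_3)+\tfrac{2}{3}(1-q_1)$, with $\overline{\lambda}^*=1/m^*$ recovered in the special case $q\equiv q^*$ (total churn equals the unit inflow). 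This observation is essentially the churn-rate half of the proof.
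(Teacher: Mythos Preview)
Your derivations are correct and match the paper's route: substitute the status-quo masses into \cref{eq:cr_def} to see that $\overline{\lambda}<\overline{\lambda}^*$ is equivalent to \cref{eq:churn-down}, and compare the treatment's steady-state population with $m^*$ to recover \cref{eq:mass-down}. The paper in fact relegates both equivalences to ``previous proofs'' (they were already established in the proofs of \Cref{lemma:q-churn-not-same} and \Cref{lemma:q-up-mass-down}), so your explicit recomputation is more self-contained than the paper's version.

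There is one substantive omission, however. The paper does not treat the lemma as a bare conditional: it also proves that the hypothesis set
\[
\left\{(q_1,q_3)\,:\,6q_3+2q_1>5,\ \frac{4\alpha+1}{1-q_3}+\frac{3-4\alpha}{1-q_1}<\frac{8(4\alpha+3)}{3}\right\}
\]
is nonempty, exhibiting the explicit witness $q_1=\frac{16\alpha+3}{4(4\alpha+3)}$, $q_3=\frac{3(16\alpha^2+20\alpha+9)}{4(4\alpha+3)^2}$ and checking both inequalities strictly. Without this step the lemma could be vacuous for some $\alpha$, and downstream results (e.g.\ \Cref{prop:main_result}) rely on the region actually having interior. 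Your proposal covers the implication cleanly but skips the existence check; adding a single witness point (or a short open-mapping argument near the status quo) would close this gap.
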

That is, the developer may observe that more users are retained in the A/B test even if her new algorithm hurts the RS's ability to retain users in the long run. 

The right panel of \cref{fig:arq_simulation} plots the dynamics of the retained user population in the above simulation. It shows that user population increases for two periods after the treatment's introduction and then gradually declines and converges to the new steady state value. However, due to the initial increase, it remains above the status quo level until the seventh period.

Why can the RS retain more users to begin with? In the simulation, $q_1$ rises by 75\%. This significant increase causes the low segment's new steady state to be considerably different from the status quo. As a result, this segment experiences a substantial user growth in the early periods. However, as the low segment's population converges, its growth slows down.
On the other hand, $q_3$ decreases by only 6.25\%. This is a relatively minor impact, which takes longer to manifest. Nevertheless, the cumulative effect of this negative impact eventually surpasses the positive impact from the low segment since the high segment has a larger population to begin with. Consequently, we observe a short-term increase in user population, even though there is a permanent decline in the long run.

Combining \Cref{cor:ab-win}, \Cref{lemma:churn-down-mass-down,lemma:q-up-mass-down}, we reach the core argument of this paper.

\begin{proposition}[Unequivocal Improvements]\label{prop:main_result}
Suppose either set of the following conditions holds:
\begin{enumerate}
    \item If $\alpha > 1/4$, $q_1$ and $q_3$ satisfy \cref{eq:q-up,eq:mass-down}.
    \item If $\alpha \leq 1/4$, $q_1$ and $q_3$ satisfy \cref{eq:churn-down,eq:mass-down}.
\end{enumerate}
The developer observes $\overline{q}^T_E > \overline{q}^*$ and $\overline{\lambda}^T_E < \overline{\lambda}^*$. However, $\overline{q}^T_\infty < \overline{q}^*$ and $m^T_\infty < m^*$.
\end{proposition}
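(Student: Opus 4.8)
The plan is to reduce \Cref{prop:main_result} to three scalar inequalities in $(q_1,q_3)$ and then dispatch them using the results already in hand. First I would record the exact translations of the three conclusions. Evaluating \cref{eq:rq_def} at the status-quo masses and comparing with $\overline{q}^*=(8\alpha+3)/(8\alpha+6)$ shows that $\overline{q}>\overline{q}^*$ holds if and only if \cref{eq:q-up} holds (the manipulation is the one already used for \Cref{lemma:q-up-mass-down}). Simplifying \cref{eq:cr_def} and comparing with $\overline{\lambda}^*=1/m^*$ shows that $\overline{\lambda}<\overline{\lambda}^*$ holds if and only if \cref{eq:churn-down} holds, as in \Cref{lemma:churn-down-mass-down}; notably this churn condition carries no $\alpha$. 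Finally, substituting $q(3/4)=q_3$, $q(1/4)=q_1$ into \cref{eq:steady-state-user-population} shows the treatment's steady-state population drops below $m^*$ exactly when \cref{eq:mass-down} holds, and by \Cref{prop:long-term-no-conflict} this is equivalent to the treatment's steady-state ARQ falling below $\overline{q}^*$. Since \cref{eq:mass-down} is assumed in both branches, the long-run conclusion (worse steady-state performance) is immediate; all that remains is to supply, in each branch, the one short-run improvement that is not assumed outright.

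The engine for this is the identity
\[
  \bigl[(12\alpha+3)q_3+(3-4\alpha)q_1-(8\alpha+3)\bigr]-\bigl[6q_3+2q_1-5\bigr]=(4\alpha-1)(3q_3-q_1-2),
\]
whose two bracketed summands are precisely the defects in \cref{eq:q-up,eq:churn-down}. Viewing the frontiers as graphs $q_3=q_3^A(q_1)$ and $q_3=q_3^C(q_1)$, a short computation gives that their gap $q_3^C-q_3^A$ has the sign of $(4\alpha-1)(1-4q_1)$, so both frontiers cross only at $(q_1,q_3)=(1/4,3/4)$ and I would split each branch at $q_1=1/4$. On $q_1\ge 1/4$ the frontier of the \emph{assumed}-improving metric is the upper of the two in both branches, so the assumed improvement (lying above its own frontier) automatically lies above the other frontier and forces the second improvement; the case $\alpha=1/4$ is trivial because the two defects coincide. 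With both observed improvements secured on this region (cf.\ \Cref{cor:ab-win}) and \cref{eq:mass-down} giving the long-run reversal, the easy side is settled.

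The main obstacle is the complementary wedge $q_1<1/4$, where the frontiers are oppositely ordered and the assumed improvement alone is insufficient; here \cref{eq:mass-down} must do the work. I would argue by contradiction, supposing the missing short-run inequality fails, which places $(q_1,q_3)$ in the band between the two frontiers. On the boundary of this band the active (assumed-improving, and here lower) frontier pins $q_3$ as an explicit affine function of $q_1$; substituting this into the left side of \cref{eq:mass-down} and clearing denominators yields a rational expression whose numerator is a quadratic in $q_1$ having $q_1=1/4$ as a root (the crossing point). The crux is to show the second root lies at $q_1\ge 1/4$, so that for $q_1<1/4$ the quadratic keeps exactly the sign that \emph{violates} \cref{eq:mass-down}; since the left side of \cref{eq:mass-down} is monotone in $q_3$, checking the boundary suffices to rule out the entire band. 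This contradicts the assumed \cref{eq:mass-down}, forcing the missing inequality and hence both $\overline{q}>\overline{q}^*$ and $\overline{\lambda}<\overline{\lambda}^*$. I expect this root comparison to be the delicate step: the exclusion is tight precisely at $(\alpha,q_1)=(1/4,1/4)$, where the numerator vanishes for \emph{every} $\alpha$, so the sign argument must be engineered to remain sharp at that corner rather than rely on any slack.
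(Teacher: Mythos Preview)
Your approach is correct and amounts to the paper's own proof: the frontier-ordering identity you display is the content of \Cref{lemmaA1}, and the quadratic/monotonicity argument on the wedge $q_1<1/4$ is exactly \Cref{lemmaA2}; the paper merely runs them in the opposite order, first invoking \Cref{lemmaA2} to force $q_1>1/4$ and then applying \Cref{lemmaA1} on that half-line. One harmless redundancy in your write-up: in the wedge step the contradiction hypothesis ``the missing short-run inequality fails'' (the band restriction) is never actually used---your monotonicity argument already shows \cref{eq:mass-down} fails for \emph{every} $q_3$ above the assumed (lower) frontier when $q_1<1/4$, so the proposition's hypotheses alone exclude the wedge, which is precisely how \Cref{lemmaA2} is stated.
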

In other words, when $q_1$ and $q_3$ move by ``right'' amounts, the A/B experiment is considered a ``success'' by either metric but the treatment permanently impairs the RS performance in the long run.

\Cref{fig:prop1} plots the $(q_1, q_3)$ pairs that satisfy \Cref{prop:main_result}, which may shed some light on the underlying mechanism. According to the figure, in order for the steady state metrics to stay constant, $q_1$ and $q_3$ must move along a concave curve (the solid lines). Conversely, to maintain either experimental metric neutral, $q_1$ and $q_3$ need only follow a straight line (the dotted or dashed lines). For all $\alpha$, there exists a section of the solid curve that lies above both straight lines.

\begin{figure}[t]
% \centering
\begin{subfigure}[t]{0.32\textwidth}
%1/4
\centering
\begin{tikzpicture}
\begin{axis}[
    axis lines = left,
    xlabel = \(q_1\),
    ylabel = {\(q_3\)},
    xmin=0, xmax=0.8, 
    ymin=0.5, ymax=0.9,
    scale=.4, 
    transform shape,
    label style={font=\footnotesize},
    tick label style={font=\footnotesize},
]
\addplot[
    domain=0.25:0.625, 
    thick,
    samples=30, 
    name path=A,
]{1 - 2 / (32/3 - 2 / (1-x))};
\addplot[
    domain=0:0.25, 
    thick,
    samples=30, 
]{1 - 2 / (32/3 - 2 / (1-x))};
\addplot[
    domain=0.625:0.75, 
    thick,
    samples=30, 
]{1 - 2 / (32/3 - 2 / (1-x))};
\addplot[
    domain=0.25:0.625,
    thick,
    densely dashed,
    name path=B,
]{(5 - 2 * x) / 6};
\addplot[
    domain=0:0.25, 
    thick,
    densely dashed,
]{(5 - 2 * x) / 6};
\addplot[
    domain=0.625:0.8,
    thick,
    densely dashed,
]{(5 - 2 * x) / 6};
\addplot[black!30] fill between[of=A and B];
\end{axis}
\end{tikzpicture}
\caption{$\alpha=1/4$}
\end{subfigure}
\begin{subfigure}[t]{0.32\textwidth}
%1/6
\centering
\begin{tikzpicture}
\begin{axis}[
    axis lines = left,
    xlabel = \(q_1\),
    ylabel = {\(q_3\)},
    xmin=0, xmax=0.8, 
    ymin=0.5, ymax=0.9,
    scale=.4, 
    transform shape,
    label style={font=\footnotesize},
    tick label style={font=\footnotesize},
]
\addplot[
    domain=0.25:(23/44), 
    samples=30, 
    thick,
    name path=A,
]{1 - 5 / (88/3 - 7 / (1-x))};
\addplot[
    domain=0:0.25, 
    thick,
    samples=30, 
]{1 - 5 / (88/3 - 7 / (1-x))};
\addplot[
    domain=(23/44):0.75, 
    thick,
    samples=30, 
]{1 - 5 / (88/3 - 7 / (1-x))};
\addplot[
    domain=0.25:(23/44),
    thick,
    densely dashed,
    name path=B,
]{(5 - 2 * x) / 6};
\addplot[
    domain=0:0.25, 
    thick,
    densely dashed,
]{(5 - 2 * x) / 6};
\addplot[
    domain=(23/44):0.8,
    thick,
    densely dashed,
]{(5 - 2 * x) / 6};
\addplot[
    domain=0:0.8, 
    thick,
    densely dotted,
]{(13 - 7 * x) / 15};
\addplot[black!30] fill between[of=A and B];
\end{axis}
\end{tikzpicture}
\caption{$\alpha=1/6$}
\end{subfigure}
\begin{subfigure}[t]{0.32\textwidth}
%1/3
\centering
\begin{tikzpicture}
\begin{axis}[
    axis lines = left,
    xlabel = \(q_1\),
    ylabel = {\(q_3\)},
    xmin=0, xmax=0.8, 
    ymin=0.5, ymax=0.9,
    scale=.4, 
    transform shape,
    label style={font=\footnotesize},
    tick label style={font=\footnotesize},
]
\addplot[
    domain=0.25:(17/26), 
    samples=30, 
    thick,
    name path=A,
]{1 - 7 / (104/3 - 5 / (1-x))};
\addplot[
    domain=0:0.25, 
    thick,
    samples=30, 
]{1 - 7 / (104/3 - 5 / (1-x))};
\addplot[
    domain=(17/26):0.77, 
    thick,
    samples=30, 
]{1 - 7 / (104/3 - 5 / (1-x))};

\addplot[
    domain=0:0.8, 
    thick,
    densely dashed,
]{(5 - 2 * x) / 6};
\addplot[
    domain=0.25:(17/26),
    densely dotted,
    thick,
    name path=C,
]{(17 - 5 * x) / 21};
\addplot[
    domain=0:0.25, 
    thick,
    densely dotted,
]{(17 - 5 * x) / 21};
\addplot[
    domain=(17/26):0.8,
    thick,
    densely dotted,
]{(17 - 5 * x) / 21};
\addplot[black!30] fill between[of=A and C];
\end{axis}
\end{tikzpicture}
\caption{$\alpha=1/3$}
\end{subfigure}
\begin{subfigure}[t]{\textwidth}
%legend
\centering
\begin{tikzpicture}
\begin{axis}[
    axis lines = left,
    xmin=0, xmax=1.5, 
    ymin=0.79, ymax=0.81,
    scale=.6, 
    transform shape,
    axis line style={draw=none},
    tick style={draw=none},
    xticklabels={,,},
    yticklabels={,,},
    label style={draw=none},
    height=0.2\textwidth,
    width=0.85\textwidth,
    clip=false,
]
\addplot[
    domain=0:0.07, 
    thick,
]{0.8};
\node[anchor=west] at (0.07,0.8) {\scriptsize $\frac{4\alpha + 1}{1-q_3} + \frac{3-4\alpha}{1-q_1} = \frac{8(4\alpha+3)}{3}$};
\addplot[
    domain=1.05:1.12, 
    thick,
    densely dashed,
]{0.8};
\node[anchor=west] at (1.12,0.8) {\scriptsize $6q_3 + 2q_1=5$};
\addplot[
    domain=1.7:1.77, 
    thick,
    densely dotted,
]{0.8};
\node[anchor=west] at (1.77,0.8) {\scriptsize $(12\alpha+3)q_3+(3-4\alpha)q_1=8\alpha+3$};
\end{axis}
\end{tikzpicture}
\end{subfigure}
\caption{Treatment Algorithms that Satisfy \Cref{prop:main_result}}
\label{fig:prop1}
\end{figure}
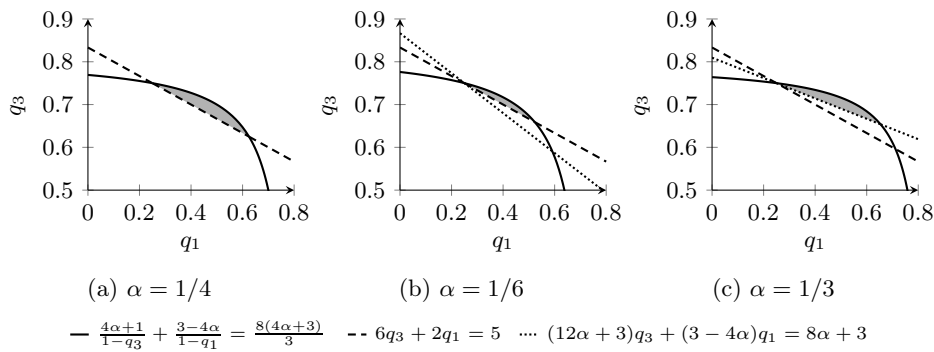

Why do $q_1$ and $q_3$ follow a concave curve to keep steady-state metrics constant, and why do they move along straight lines to maintain experimentation metrics neutral? Mathematically, the impacts of $\Delta q_1$ and $\Delta q_3$ enter the short-term experimentation metrics \textit{linearly} because only one period is measured. However, the system is generally far from converged. 
From the second period onward, each segment's state lies between the original and the new steady state, causing the effects to diminish gradually. In fact, all subsequent periods contribute to the long term calculation \textit{geometrically}. Since the horizon of the experimentation is finite, these cumulatively significant long-term effects are therefore overlooked.

As depicted in \cref{fig:prop1}, for \Cref{prop:main_result} to hold, the recommendation quality in the high segment only decreases mildly, while the low segment’s recommendation quality improves significantly.
In fact, the $q_3$ elasticity of $q_1$, formally defined as $-(\Delta q_1 / q_1^*) / (\Delta q_3 / q_3^*)$, is greater than 9 for all $q_1$ and $q_3$ that satisfy \Cref{prop:main_result}. Therefore, the intuition behind \Cref{lemma:churn-down-mass-down} generally applies: the more substantial improvement in the low segment has a more immediate impact, while the negative impact in the high segment materializes more slowly.

\section{Research Implications and Limitations}

\subsubsection{Implications} This discovery has profound and disturbing implications in the real world. First, it questions the validity of the A/B testing paradigm in RS evaluation.
A/B testing is often regarded the gold standard for determining the efficacy of RS \cite{Fabijan2023A/BFeature}. Numerous authors and RS practitioners assert the superiority of their approaches by presenting results from side-by-side tests \cite{Soria2020UsingPersonalize,Vas2021BehindSystems}. However, increases in recommendation quality metrics and retaining more users in A/B tests may merely indicate the system’s bleak long-term successfulness.

Secondly, well-intentioned developers may inadvertently harm the system. For instance, a developer might aim to enhance the low segment's performance. However, improving the recommendation quality for one segment is not always free. For a new algorithm that significantly improves the low segment but causes a seemingly minor loss in the high segment, the algorithm might be deemed an improvement in an A/B test, despite its long-term detrimental impact. On the contrary, a nefarious developer may undermine the system by cooking up a series of sabotages under the guise of improvements. 

Finally, to accurately assess the long term impacts of RS improvements, it is imperative for researchers to come up with a sophisticated experimental methodology in future studies. Questions arise such as: Should an experimentalist explicitly segment users? Is it necessary to measure system convergence? Most crucially, how can one forecast the long-term steady-state behavior from a sample with a finite horizon?

\subsubsection{Limitations}

Like all research, this piece is not without limitations. First, the findings are theoretical constructs from an idealized model, whereas real-world RSs can be affected by various noises, making it challenging to isolate retention induced biases from confounding factors. For empirical evidence that may support these theoretical findings, one may retroactively examine some natural experiments. In tech companies, restructurings due to exogenous reasons (e.g., leadership changes, layoffs, mergers) occasionally occur, disrupting RS teams. In such cases, the RS may only receive minimal maintenance, ceasing active development and naturally progressing towards its long-term state. If one can access data from these unfortunate cases, one may study the RS's real-world trajectory as it converges to its steady state.

The mechanism described here relies on two key assumptions: (1) an open system with free user inflow and churn, and (2) user churn stochastically related to recommendation quality. Thus, it doesn't apply to closed systems with fixed membership, like internal recommendation engines, or monopolistic platforms with inelastic demand. Additionally, the analysis pertains mainly to mature systems, as nascent systems in hyper-growth stages are too unstable to discuss steady states meaningfully.

The success of RS in this study is measured by the number of users. Even though it is not very controversy to assume that today's tech companies are interested in getting more users and grow, there may be exceptions. For example, because not all users generate the same amount of revenue, a revenue-maximizing stakeholder might forgo unprofitable users. This paper cannot analyze all conceivable success metrics. Nevertheless, the core principle that uneven churn leads to differences between short-term and long-term user distributions remains applicable to metrics related to users. Interested readers can use the framework presented here to analyze their own metrics.

Moreover, many real world RSs are constantly changing and some might never reach their steady states if changes are too frequent. Therefore, even though the findings here have theoretical importance on their own, one may question their practicability. For instance, can a developer constantly introduce short-term boosts (that are harmful in the long run) to indefinitely stall the downturn? This question merits further research.

\section{Conclusion}

This study examines a conceptual model of an RS, its developer, and its heterogeneous users. With the dynamics of user inflow and churn, the user distribution may achieve a steady state.
When a new recommendation algorithm is introduced, the system moves away from its current steady state and begins transitioning to a new one, which can be a long process. However, the A/B test that validates the new algorithm is usually conducted shortly after the algorithm’s development. During the transition period, the system's behavior can differ significantly from its steady state. Consequently, the metrics from A/B testing may provide misleading indicators of the new algorithm’s long term success. This discrepancy is the live testing bias caused by user retention dynamics.
In addition, retention dynamics can also introduce data bias for model training. Given the theoretical focus of this paper, examining data bias is reserved for future work.

% \begin{credits}
% \subsubsection{\ackname} A bold run-in heading in small font size at the end of the paper is
% used for general acknowledgments, for example: This study was funded
% by X (grant number Y).

% \subsubsection{\discintname}
% The author has no competing interests to declare that are
% relevant to the content of this article.
% \end{credits}
%
% ---- Bibliography ----
%
% BibTeX users should specify bibliography style 'splncs04'.
% References will then be sorted and formatted in the correct style.
%
\bibliographystyle{splncs04}
\bibliography{references1}

\appendix
\section*{Appendix}
\small
\begin{lemma}\label{lemmaA1}
The following results hold:
\begin{enumerate}
    \item For all $\alpha$, $q_1=1/4, q_3=3/4$ satisfy $(12\alpha+3)q_3 + (3-4\alpha)q_1 = 8\alpha+3$ and $6q_3 + 2q_1 = 5$ simultaneously.
    \item When $\alpha = 1/4$, $(12\alpha+3)q_3 + (3-4\alpha)q_1 = 8\alpha+3$ and $6q_3 + 2q_1 = 5$ are equivalent.
    \item \Cref{eq:churn-down} implies \cref{eq:q-up} if $\alpha < 1/4 \leq q_1$ or $q_1 \leq 1/4 < \alpha$.
    \item \Cref{eq:q-up} implies \cref{eq:churn-down} if $\alpha < 1/4, q_1 < 1/4$ or $\alpha > 1/4, q_1 > 1/4$.
\end{enumerate}
\end{lemma}
\begin{proof}
The first result can be obtained by plugging $q_1=1/4, q_3=3/4$ into $(12\alpha+3)q_3 + (3-4\alpha)q_1 = 8\alpha+3$ and $6q_3 + 2q_1 = 5$.
The second result can be obtained by substituting $\alpha$ with $1/4$ in $(12\alpha+3)q_3 + (3-4\alpha)q_1 = 8\alpha+3$.

To prove the third result, notice $(12\alpha+3)q_3 + (3-4\alpha)q_1 = (12\alpha+3)q_3 + (4\alpha+1)q_1 + (2-8\alpha)q_1$. By \cref{eq:churn-down}, $(12\alpha+3)q_3 + (4\alpha+1)q_1 > 5(4\alpha+1)/2$. Therefore, $(12\alpha+3)q_3 + (4\alpha+1)q_1 + (2-8\alpha)q_1 > (2-8\alpha)q_1 + 5(4\alpha+1)/2$. When $\alpha < 1/4 \leq q_1$ or $q_1 \leq 1/4 < \alpha$, $(2-8\alpha)q_1 \geq (2-8\alpha)/4$. In sum, we have $(12\alpha+3)q_3 + (3-4\alpha)q_1 > (2-8\alpha)/4 + 5(4\alpha+1)/2 = 8\alpha + 3$.

To show \cref{eq:churn-down}, it is sufficient to show $(12\alpha+3)q_3 + (4\alpha+1)q_1 > 5(4\alpha+1)/2$. Notice $(12\alpha+3)q_3 + (4\alpha+1)q_1 = (12\alpha+3)q_3 + (3-4\alpha)q_1 + (8\alpha-2)q_1$. By \cref{eq:q-up}, $(12\alpha+3)q_3 + (3-4\alpha)q_1 > 8\alpha+3$. When $\alpha < 1/4$ and $q_1 < 1/4$ or $\alpha > 1/4$ and $q_1 > 1/4$, $(8\alpha-2)q_1 > (8\alpha-2)/4$. Thus, $(12\alpha+3)q_3 + (3-4\alpha)q_1 + (8\alpha-2)q_1 > 10\alpha + 5/2$. \qed
\end{proof}

\begin{lemma}\label{lemmaA2}
If either \cref{eq:q-up,eq:mass-down} or \cref{eq:churn-down,eq:mass-down} hold, we have $q_1 > 1/4$.
\end{lemma}
\begin{proof}
For the first part, for a proof of contradiction, I assume that \cref{eq:q-up,eq:mass-down} hold for some $q_1 \leq 1/4$. Notice \cref{eq:mass-down} can be rearranged as $q_3 < 1 - 3(4\alpha+1)(1-q_1)/(8(4\alpha+3)(1-q_1) - 3(3-4\alpha))$; \cref{eq:q-up} can be rearranged as $q_3 > ((8\alpha+3)-(3-4\alpha)q_1)/(12\alpha+3)$.
For \cref{eq:q-up,eq:mass-down} to hold simultaneously, we must have $((8\alpha+3)-(3-4\alpha)q_1)/(12\alpha+3) < 1 - 3(4\alpha+1)(1-q_1)/(8(4\alpha+3)(1-q_1) - 3(3-4\alpha))$, which can be rearranged as $((3(4\alpha+3))^2(1-q_1) - (3-4\alpha)(44\alpha+15-(32\alpha+34)q_1)q_1)/((44\alpha+15) - (32\alpha+24)q_1) < 4\alpha$.
Because $q_1 \leq 1/4$, $(44\alpha+15) - (32\alpha+24)q_1 > 0$. Thus, we only require $8(3+4\alpha)q_1^2 - (40\alpha + 18)q_1 + 8\alpha + 3 < 0$.
When $q_1 \leq 1/4$, its left hand side is monotonically decreasing with respect to $q_1$ and its minimal value is obtained at $q_1 = 1/4$, which is $0$. Therefore, $8(3+4\alpha)q_1^2 - (40\alpha + 18)q_1 + 8\alpha + 3 \geq 0$ for all $q_1 \leq 1/4$, a contradiction.

For the second part, similarly, I assume that \cref{eq:churn-down,eq:mass-down} hold for some $q_1 \leq 1/4$. For \cref{eq:churn-down,eq:mass-down} to both hold, we must have $(5-2q_1)/6 < 1 - 3(4\alpha+1)(1-q_1)/(8(4\alpha+3)(1-q_1) - 3(3-4\alpha))$, which can be simplified to $(64\alpha+48)q_1^2 - (128\alpha + 24)q_1 + 28\alpha + 3 < 0$. Following the same strategy above, a similar contradiction can be constructed.\qed
\end{proof}

\subsubsection{Proof of \Cref{prop:long-term-no-conflict}}
Setting $q^T_\infty$ less (greater) than $\overline{q}^*$ yields $(4\alpha+1)((8\alpha+6)q_3 - (8\alpha+3)) / (1-q_3) < (>) (3-4\alpha)((8\alpha+3) - (8\alpha+6)q_1) / (1-q_1)$.
Because $q_1,q_3 \in (0,1)$, it can be further simplified to
\begin{equation}\label{eq:steady-state-recommendation-quality-cond}
    -32\alpha q_1 q_3 + 20\alpha q_1 + 44\alpha q_3 - 24q_1q_3 -32\alpha + 21 q_1 +15q_3 - 12 < (>) 0.
\end{equation}
Similarly, setting $m^T_\infty = 2(4\alpha+1)/(3(1-q_3)) + 2(3-4\alpha) / (3(1-q_1))$ less (greater) than $m^*$ yields $(4\alpha+1) / (1-q_3) + (3-4\alpha) / (1-q_1) < (>) 8(4\alpha+3) / 3$. Because $q_1,q_3 \in (0,1)$, it can be further simplified to
\begin{equation}\label{eq:steady-state-user-population-cond}
    -32 \alpha q_1 q_3 / 3 + 20 \alpha q_1 / 3 + 44\alpha q_3 / 3 - 8q_1q_3 - 32 \alpha / 3 + 7 q_1 + 5q_3 - 4 < (>) 0.
\end{equation}
Notice \cref{eq:steady-state-recommendation-quality-cond,eq:steady-state-user-population-cond} are identical except for a positive constant factor.\qed

\subsubsection{Proof of \Cref{lemma:q-churn-not-same}}

Plugging $m^*(x, e)$ into \cref{eq:rq_def} yields $((12\alpha + 3)q_3 + (3-4\alpha)q_1)/(8\alpha + 6)$. Setting it greater than $(8\alpha +3)/(8\alpha +6)$ results \cref{eq:q-up}. Similarly, plugging $m^*(x, e)$ into \cref{eq:cr_def} and setting it greater than $1/m^*$ results $6q_3+2q_1<5$.
Combining it with \cref{eq:q-up} results \cref{eq:lemma1}. For \cref{eq:lemma1} to hold, a necessary condition is $(5-2q_1)/6 > ((4\alpha-3)q_1 + 8\alpha+3)/(12\alpha+3)$, which simplifies to $(4\alpha-1) > 4(4\alpha-1)q_1$. Hence, \cref{eq:lemma1} can possibly hold if and only if $q_1 < 1/4 < \alpha$ or $\alpha < 1/4 < q_1$.\qed

\subsubsection{Proof of \Cref{cor:ab-win}}
Following similar steps in the proof of \Cref{lemma:q-churn-not-same}, one may reach that \cref{eq:q-up} is the condition for $\overline{q}^T_E > \overline{q}^*$ and \cref{eq:churn-down} is the condition for $\lambda^T_E < \lambda^*$. Therefore, \cref{eq:q-up,eq:churn-down} must hold simultaneously here. 

As shown in \Cref{lemmaA1}, \cref{eq:q-up,eq:churn-down} coincide when $\alpha=1/4$, which proves the first part of \Cref{cor:ab-win}.
For the second and the third points, \Cref{lemmaA1} shows that \cref{eq:q-up} is redundant if $\alpha < 1/4 \leq q_1$ or $q_1 \leq 1/4 < \alpha$ and \cref{eq:churn-down} is redundant if $\alpha < 1/4, q_1 < 1/4$ or $\alpha > 1/4, q_1 > 1/4$. Finally, leaving out the redundant inequalities and rearranging terms yield the second and the third points.\qed

\subsubsection{Proof of \Cref{lemma:q-up-mass-down}}
As shown in previous proofs, \cref{eq:q-up} is the condition for $\overline{q}^T_E > \overline{q}^*$. Following the definition of $m_\infty$, we have $ m^T_\infty = (2(4\alpha+1))/(3(1-q_3)) + (2(3-4\alpha))/(3(1-q_1))$. Setting it less than $m^*$ yields \cref{eq:mass-down}.

To show the set
\begin{equation}\label{eq:set-higher-quality-lemma}
    \left\{
    (q_1,q_3) \left| (12\alpha+3)q_3 + (3-4\alpha)q_1>8\alpha+3, \frac{4\alpha+1}{1-q_3} + \frac{3-4\alpha}{1-q_1} < \frac{8(4\alpha+3)}{3}\right.
    \right\}
\end{equation}
is not empty, it is sufficient to find one element that belongs to it. For this, we may consider the case where $q_1 = (20\alpha+9)/(32\alpha+24)$ and $q_3 = (108\alpha+63) / (128\alpha+96)$.
Plugging them into $(12\alpha+3)q_3 + (3-4\alpha)q_1 - (8\alpha+3)$ and ${(4\alpha+1)}/{(1-q_3)} + {(3-4\alpha)}/{(1-q_1)} - {8(4\alpha+3)}/3$ yields $3(4\alpha+1)(3-4\alpha)/(32(4\alpha+3)) > 0$ and $-(16(4\alpha+1)(4\alpha+3)(3-4\alpha))/(3(4\alpha+5)(20\alpha+33)) < 0$ respectively.\qed

\subsubsection{Proof of \Cref{lemma:churn-down-mass-down}} 
As shown in previous proofs, \cref{eq:mass-down} is the condition for $ m^T_\infty < m^*$; \cref{eq:churn-down} is the condition for $\lambda^T_E < \lambda^*$. Similar to the proof of \Cref{lemma:q-up-mass-down}, we need to show the set
\begin{equation}\label{eq:set-lower-churn-lemma}
    \left\{
    (q_1,q_3) \left| 6q_3 + 2q_1>5, \frac{4\alpha+1}{1-q_3} + \frac{3-4\alpha}{1-q_1} < \frac{8(4\alpha+3)}{3}\right.
    \right\}
\end{equation}
is not empty. To show it, plugging $q_1 = (16\alpha+3)/(16\alpha+12)$ and $q_3 = 3(16\alpha^2+20\alpha+9)/(8\alpha+6)^2$ into $6q_3 + 2q_1 - 5$ and ${(4\alpha+1)}/{(1-q_3)} + {(3-4\alpha)}/{(1-q_1)} - {8(4\alpha+3)}/3$ yields $(24\alpha^2)/(4\alpha+3)^2 > 0$ and $-(64\alpha^2(4\alpha+3)^2)/(9(16\alpha^2+36\alpha+9)) < 0$.\qed

\subsubsection{Proof of \Cref{prop:main_result}} 
As shown in previous proofs, \cref{eq:q-up} is the condition for $\overline{q}^T_E > \overline{q}^*$; \cref{eq:mass-down} is the condition for $m^T_\infty < m^*$; \cref{eq:churn-down} is the condition for $\overline{\lambda}^T_E < \overline{\lambda}^*$. In order for \Cref{prop:main_result} to hold, we need all \cref{eq:q-up,eq:mass-down,eq:churn-down} to hold simultaneously.

By \Cref{lemmaA2}, we have $q_1 > 1/4$ when either \cref{eq:q-up,eq:mass-down} or \cref{eq:mass-down,eq:churn-down} hold.
By \Cref{lemmaA1}, \cref{eq:q-up} implies \cref{eq:churn-down} under $\alpha > 1/4$ and $q_1 > 1/4$. Therefore, $\alpha > 1/4$ and \cref{eq:q-up,eq:mass-down} are sufficient for \cref{eq:q-up,eq:mass-down,eq:churn-down} to hold.
Similarly, \cref{eq:churn-down} implies \cref{eq:q-up} under $\alpha < 1/4$ and $q_1 > 1/4$. Therefore, $\alpha < 1/4$ and \cref{eq:mass-down,eq:churn-down} are sufficient for \cref{eq:q-up,eq:mass-down,eq:churn-down} to hold.
Lastly, \cref{eq:q-up,eq:churn-down} coincide if $\alpha = 1/4$.\qed

\end{document}